\newcommand{\dkedit}[1]{{#1}}
\newcommand{\aiedit}[1]{{#1}}
\newcommand{\Omitthis}[1]{}
\newcommand{\suggestedcut}[1]{}
\newtheorem{theorem}{Theorem}
\newtheorem{lemma}{Lemma}
\newtheorem{definition}{Definition}
\newtheorem{example}{Example}
\newcommand{\Elts}{\ensuremath{E}\xspace}
\newcommand{\Feasible}{\ensuremath{{\cal F}}\xspace}
\newcommand{\SetSys}{\ensuremath{(\Elts,\Feasible)}\xspace}
\newcommand{\EltsP}{\ensuremath{E'}\xspace}
\newcommand{\FeasibleP}{\ensuremath{{\cal F'}}\xspace}
\newcommand{\OwnershipP}{\ensuremath{{\cal A'}}\xspace}
\newcommand{\SetSysP}{\ensuremath{(\EltsP,\FeasibleP)}\xspace}
\newcommand{\costs}{\ensuremath{\mathbf{c}}\xspace}
\newcommand{\mech}{\ensuremath{{\cal M}}\xspace}
\newcommand{\CHNASH}{\ensuremath{\nu}\xspace}
\newcommand{\ChNash}[1]{\ensuremath{\CHNASH(#1)}\xspace}
\newcommand{\MechPay}[2][]{\ensuremath{p_{#1}(#2)}\xspace}
\newcommand{\MechRat}[1]{\ensuremath{\phi_{#1}}\xspace}
\newcommand{\RP}{\ensuremath{\mathrm{AP}}\xspace}
\newcommand{\EP}{\ensuremath{\mathrm{MP}}\xspace}
\newcommand{\SSClass}{\ensuremath{{\mathcal{C}}}\xspace}
\newcommand{\Owned}[1]{\ensuremath{A^{#1}}\xspace}
\newcommand{\Owner}[1]{\ensuremath{o(#1)}\xspace}
\newcommand{\OwnerP}[1]{\ensuremath{\tilde{o}(#1)}\xspace}
\newcommand{\LEN}{\ensuremath{w}\xspace}
\newcommand{\len}[1]{\ensuremath{\LEN(#1)}\xspace}
\newcommand{\LPen}[2]{\ensuremath{D_{#1}(#2)}\xspace}
\newcommand{\Adj}[1]{\ensuremath{\beta(#1)}\xspace}
\newcommand{\Pay}[1]{\ensuremath{p_{#1}}\xspace}
\newcommand{\THBid}[1]{\ensuremath{B^{#1}}\xspace}
\newcommand{\RVCG}{\ensuremath{\mbox{RVCG}}\xspace}
\newcommand{\Ownership}{\ensuremath{{\cal A}}\xspace}
\newcommand{\SetSysA}{\ensuremath{(\SetSys,\Ownership)}\xspace}
\newcommand{\SetSysAP}{\ensuremath{(\SetSysP,\OwnershipP)}\xspace}
\journal{Artificial Intelligence Journal}
\begin{document}

\begin{frontmatter}

%% Title, authors and addresses

%% use the tnoteref command within \title for footnotes;
%% use the tnotetext command for theassociated footnote;
%% use the fnref command within \author or \address for footnotes;
%% use the fntext command for theassociated footnote;
%% use the corref command within \author for corresponding author footnotes;
%% use the cortext command for theassociated footnote;
%% use the ead command for the email address,
%% and the form \ead[url] for the home page:
%% \title{Title\tnoteref{label1}}
%% \tnotetext[label1]{}
%% \author{Name\corref{cor1}\fnref{label2}}
%% \ead{email address}
%% \ead[url]{home page}
%% \fntext[label2]{}
%% \cortext[cor1]{}
%% \address{Address\fnref{label3}}
%% \fntext[label3]{}

\title{False-name-proof Mechanisms for Hiring a Team\tnoteref{t1}}

\tnotetext[t1]{This paper is an extented version of 
  ``False-name-proof Mechanisms for Hiring a Team'' 
  in the Proceedings of the Workshop on Internet and Network Economics, 2007.}

%% use optional labels to link authors explicitly to addresses:
%% \author[label1,label2]{}
%% \address[label1]{}
%% \address[label2]{}

\author[kyushu]{Atsushi Iwasaki\corref{cor1}}
\ead{iwasaki@inf.kyushu-u.ac.jp}
\author[usc]{David Kempe}
\ead{dkempe@usc.edu}
\author[usc]{Mahyar Salek}
\ead{salek@usc.edu}
\author[kyushu]{Makoto Yokoo}
\ead{yokoo@inf.kyushu-u.ac.jp}

\cortext[cor1]{Corresponding author}
% \cortext[cor2]{Principal corresponding author}
% \fntext[fn1]{Graduate School of ISEE, Kyushu University, Fukuoka 819-0395, Japan}
% \fntext[fn2]{Department of Computer Science, University of Southern California, CA 90089-0781, USA}

\address[kyushu]{Graduate School of ISEE, Kyushu University, Fukuoka 819-0395, Japan}
\address[usc]{Department of Computer Science, University of Southern California, CA 90089-0781, USA}

\begin{abstract}
%% Text of abstract
  We study the problem of hiring a team of selfish agents to perform a task.  
  Each agent is assumed to own one or more elements of a set system, 
  and the auctioneer is trying to purchase a feasible solution 
  by conducting an auction. 
  Our goal is to design auctions that are truthful and false-name-proof, 
  meaning that it is in the agents' best interest to reveal
  ownership of all elements (which may not be known to the auctioneer
  a priori) as well as their true incurred costs.
    
  We first propose and analyze a false-name-proof mechanism for the
  special case where each agent owns only one element in reality,
  \dkedit{but may pretend that this element is in fact a set of
    multiple elements}.  
  We prove that its frugality ratio is bounded by $2^n$, 
  which, \dkedit{up to constants}, matches a lower bound of $\Omega(2^n)$ 
  for all false-name-proof mechanisms in this scenario.
  We then propose a second mechanism \dkedit{for the general case in
    which agents may own multiple elements}.
  It requires the auctioneer to choose a reserve cost a priori, 
  and thus does not always purchase a solution. 
  In return, it is false-name-proof even
  when agents own multiple elements. 
  We experimentally evaluate the payment (as well as social surplus) of 
  the second mechanism through simulation.
\end{abstract}

\begin{keyword}
%% keywords here, in the form: keyword \sep keyword
mechanism design \sep hiring a team \sep truthfulness \sep false-name-proofness
%% PACS codes here, in the form: \PACS code \sep code

%% MSC codes here, in the form: \MSC code \sep code
%% or \MSC[2008] code \sep code (2000 is the default)

\end{keyword}

\end{frontmatter}

%% \linenumbers

%% main text
%-------------------------------------------------------------------------------------
\section{Introduction}
\label{sec:intro}
%-------------------------------------------------------------------------------------
One of the important challenges of electronic commerce, 
in particular
in large-scale settings such as the Internet, is to design protocols 
for dealing with parties having diverse and selfish interests. 
Frequently, 
one of the most convenient ways of structuring these interactions 
is via \todef{auctions}: based on bids submitted by the participants, 
the auctioneer chooses whom to sell items to or purchase
\dkedit{items} from, and decides on appropriate payments. 
The analytical study of auctions for e-commerce has recently led to
very fruitful interactions between the fields of economics, game theory, 
theoretical computer science, and artificial intelligence. 

While single-item auctions have a long history of study in economics
(see, e.g., \cite{klemperer:guide,krishna:auction-theory}), 
the problem is significantly more complex 
when there are combinatorial dependencies between items. 
In a \todef{combinatorial auction}~\cite{combinatorial-auctions}, 
the auctioneer has a set of items for sale, 
and agents submit bids for different subsets. 
Each item can only be assigned to one agent.

\dkedit{%
In contrast to combinatorial auctions, where an auctioneer is trying
to \emph{sell} a set of items}, we study the problem of \todef{hiring a
team of agents} 
\cite{archer:tardos:path-mechanisms:J,garg:kumar:rudra:verma:J,talwar:price-of-truth}, 
In that problem, 
an auctioneer knows which subsets of agents can perform a complex task together, 
and needs to hire such a team. (called a \todef{feasible set} of agents.) 
Since the auctioneer does not know the true costs incurred by agents,
we assume that the auctioneer will use an auction to elicit bids.
A particularly well-studied special case of this problem is that of a \todef{path auction} 
\cite{archer:tardos:path-mechanisms:J,elkind:sahai:steiglitz,BeyondVCG,nisan:ronen:algorithmic:J}:
the agents own edges of a known graph, and the auctioneer wants to purchase an $s$-$t$ path. 

Selfish agents will try to maximize their profit, 
even if it requires misrepresenting their incurred cost or their identity. 
\dkedit{The field of \todef{mechanism design} focuses on the design of the
interaction between agents and computation to mitigate the effects
of such selfish behavior
\cite{nisan:ronen:algorithmic:J,mas-collel:whinston:green,papadimitriou:games}.}
In particular, there has been a lot of recent focus on 
the design of \todef{truthful} auctions, 
in which it is in the agents' best interest to reveal their true costs to the auctioneer. 

% \dkedit{[DK: I removed about a paragraph here, which was redundant.]}

While the concept of truthfulness addresses the \dkedit{concern} that agents 
may misrepresent their true costs, there is a second way 
in which agents could cheat: 
an agent owning multiple elements of a set system (such as multiple edges in a graph) 
may choose different identities for interacting with the auctioneer, 
to obtain higher payments. 
Similarly, an agent owning one element may be able to pretend that 
this element is in fact a set of multiple elements, owned by different agents, 
to obtain payments for all of these ``pseudo-agents''. 
Such behavior is called \todef{false-name manipulation}, 
and was recently studied by Yokoo et al.~in the context of combinatorial auctions 
\cite{yokoo:aij2001,yokoo:geb:false-name}, where it was shown that for
any Pareto efficient auction, agents can profit by submitting bids as
two identities.

%------------------------------------------------------------------------------
\subsection{Our contributions}\label{sec:contributions}
%------------------------------------------------------------------------------
We introduce a model of false-name manipulation in auctions for hiring a team, 
such as $s$-$t$ path auctions.
In this model, the set system structure and element ownership 
are not completely known to the auctioneer.  
Thus, in order to increase profit, 
an agent who owns an element can pretend that the element is in fact 
a set consisting of multiple elements owned by different agents. 
Similarly, an agent owning multiple elements can submit bids 
for these elements under different identities. 
We call a mechanism \todef{false-name-proof} if it is truthful, 
and a dominant strategy is for each agent to reveal ownership of all elements.

Our first main contribution is a false-name-proof mechanism \EP 
for the special case in which each agent owns exactly one element.
\dkedit{Thus, the mechanism only needs to guard against an agent
pretending that a single element is a set of elements, owned by
distinct agents.}
This mechanism introduces an exponential multiplicative penalty 
against sets in the number of participating agents. 
We show that its frugality ratio 
(according to the definition of Karlin et al.~\cite{BeyondVCG}) is 
at most \dkedit{$2^n$} for all set systems of $n$ elements, 
which matches \dkedit{--- up to constants ---} a worst-case lower
bound of $\Omega(2^n)$ we establish for \emph{every} false-name-proof mechanism. 

When agents may own multiple elements, 
\dkedit{designing either a false-name
proof mechanism with bounded frugality ratio or proving an
impossibility result appears challening. The main reason is that we
currently do not have a good characterization of incentive-compatible
mechanisms with a sufficiently complex action space for the
agents. Instead,}
we present an alternative mechanism \RP, 
based on an a priori chosen reserve cost $r$ and additive penalties. 
The mechanism is false-name-proof in the general setting, 
but depends crucially on the choice of $r$, 
as it will not purchase a solution unless there is one 
whose cost (including the penalty) is at most $r$. 
We investigate the \RP mechanism experimentally 
for $s$-$t$ path auctions on random graphs, observing 
that \RP provides social surplus not too far from 
a Pareto efficient one at an appropriate reserve cost.
Also, the payments of \RP \dkedit{are} smaller than 
\dkedit{those of} the Vickrey-Clarke-Groves (VCG) mechanism when the
reserve cost is small, while \dkedit{they become} higher than VCG's
when the reserve cost is high.
However, the payment never exceeds the reserve cost.

%------------------------------------------------------------------------------
\subsection{Related \dkedit{Work}}
\label{sec:related works}
%------------------------------------------------------------------------------
Motivated by the need to deal with selfish users, 
there has been a large body of recent work 
at the intersection of game theory, economic theory 
and theoretical computer science~(see, e.g., \cite{nisan:selfish-agents,papadimitriou:games}). 
For instance, 
the seminal paper of Nisan and Ronen \cite{nisan:ronen:algorithmic:J}, 
which introduced mechanism design to the theoretical computer science community, 
studied the tradeoffs between agents' incentives and computational complexity. 
The loss of efficiency in network games 
due to selfish user behavior has been studied 
\dkedit{under the names} of ``price of anarchy'' 
(see, e.g., \cite{papadimitriou:games,roughgarden:tardos:selfish-routing:J}), 
and ``price of stability'' (see \cite{ADKTWR:J}).

The problem of hiring a team of agents in complex settings, 
at minimum total cost, has been shown to have many practical economic applications 
(see~\cite{feigenbaum:papadimitriou:sami:shenker:J,oneill:julian:chiang:boyd,lazar:orda:pendarakis:J,nisan:london:regev:carmiel} for examples).
In particular, the path auction problem has been 
the subject of a significant amount of prior research. 
The traditional economics approach to payment minimization 
(or profit maximization) is to construct the optimal Bayesian auction 
given the prior distributions from which agents' private values are drawn. 
Indeed, path auctions and similar problems have been studied recently 
from the Bayesian perspective in \cite{elkind:sahai:steiglitz,czumaj:ronen:expected}.
Here, we instead follow the approach pioneered 
by Archer, Tardos, Talwar and others~\cite{archer:tardos:path-mechanisms:J,bikchandani:devries:schummer:vohra,BeyondVCG,talwar:price-of-truth}, 
and study the problem from a worst-case perspective. 
Significant insight can be gained from an understanding of worst-case performance, 
and it enables an uninformed or only partially informed auctioneer 
to evaluate the trade-off between an auction tailored to assumptions 
about bidder valuations (which may or may not be correct) versus 
an auction designed to work 
as well as possible under unknown and worst-case market conditions.

If false-name bids are not a concern, 
then it has long been known that the VCG mechanism
\cite{vickrey:counterspeculation,clarke:multipart,groves:incentives}
gives a truthful mechanism and identifies the Pareto optimal solution. 
It is based on \aiedit{Vickrey's second-price auction}~\cite{vickrey:counterspeculation}, 
which is truthful for single-item auctions.
As the payments of VCG can be significantly higher than the cheapest
alternative solution, several papers
\cite{archer:tardos:path-mechanisms:J,talwar:price-of-truth,elkind:sahai:steiglitz,BeyondVCG}
have investigated the \todef{frugality} of mechanisms:
the overpayment compared to a natural lower bound. 
In particular, Karlin et al.~\cite{BeyondVCG} \dkedit{present} a mechanism 
--- called the $\sqrt{}$ mechanism --- achieving \dkedit{a} frugality ratio 
within a constant factor of optimal for $s$-$t$ path auctions in graphs.
Traditionally, for ``hiring a team'' auctions, 
\todef{incentive compatibility} has only \dkedit{encompassed making
the} revelation of true \emph{costs} a dominant strategy for each bidder.

The issue of false-name bids \dkedit{has been previously} studied in 
\dkedit{several cases of} combinatorial auctions \dkedit{and
  procurement auctions} by Yokoo et 
al.~\cite{iwasaki:dss:2005,suyama:aamas:2005,suyama:wine2005,yokoo:charactrization:ijcai2003,yokoo:aij2001},
who developed false-name-proof mechanisms in those scenarios, 
but also proved that no mechanism can be both false-name-proof and Pareto efficient.
Notice that the false-name-proof mechanisms for 
combinatorial procurement auctions given in \cite{suyama:aamas:2005,suyama:wine2005}
cannot be applied in our setting, 
as they assume additive valuations on the part of the auctioneer, 
i.e., that the auctioneer derives partial utility from partial solutions.
A somewhat similar scenario arises in job scheduling, 
where users may split or merge jobs to obtain earlier assignments. 
Moulin~\cite{moulin:split-proof} gives a mechanism that is truthful/strategy-proof
against both merges and splits and achieves efficiency within 
a constant factor of optimum. 
However, when agents can exchange money, 
no such mechanism is possible \cite{moulin:split-proof}.

\suggestedcut{
Combinatorial procurement auctions are quite similar to the auctions
for hiring a team. %In a combinatorial procurement auction,
There exists a set of tasks to be assigned to agents (service providers). 
If we assume an element (in our model) as a certain skill/capability to perform a task, 
we can model an auction for hiring a team 
as one instance of combinatorial procurement auctions.%
\footnote{Another difference is that in a combinatorial procurement
auction, we usually assume more complex utility functions
of the auctioneer/agents, e.g., the utility of the auctioneer can be
positive even if only a part of the tasks is assigned.}
However, in a combinatorial procurement auction, we usually assume that the
auctioneer has a predefined set of tasks, where each task cannot be
divided into smaller subtasks. However, in our model, we assume an
agent can self-divide, i.e., divide an element
into smaller sub-elements.  Therefore, we cannot apply
false-name-proof combinatorial procurement auction
mechanisms presented in \cite{suyama:aamas:2005,suyama:wine2005}.
Even if we consider identifier splitting only,
the mechanism presented in \cite{suyama:aamas:2005} requires that
the utility of the auctioneer be additive,
i.e., the sum of the utilities of sub-tasks performed by agents.
This is not true in path auctions,
where the preference of the auctioneer is all-or-nothing,
i.e., if the auctioneer fails to buy a single link in a path,
the rest of the links in the path are useless.
\cite{suyama:wine2005} presents a mechanism that can be used
when the preference of the auctioneer is not additive.
However, the assumption in this mechanism is that the auctioneer
can perform each sub-task with a certain cost by himself.
In our scenario, this would imply that the auctioneer owns
a parallel edge for each edge in the graph. }

For the specific case of path auctions, the impact of false-name bids
was studied by Du et al.~\cite{du:netecon:2006:J}. 
They showed that if agents can own multiple edges, 
then there is no false-name-proof and efficient mechanism. 
Furthermore, if bids are anonymous, 
i.e., agents do not report any identity for edge ownership, 
then no mechanism can be truthful/strategy-proof. 
Notice that this does not preclude false-name-proof and truthful mechanisms 
in which the auctioneer takes ownership of multiple edge 
by the same agent into account, and rewards the agent accordingly.

\suggestedcut{
The problems in the design of incentive-compatible mechanisms caused
by unknown domains have also been studied in the context of
combinatorial auctions by Babaioff et al.~\cite{babaioff:lavi:pavlov}. 
Among others, they show that useful monotonicity characterizations 
of incentive compatible mechanisms can break down 
in the case of unknown domains for individual agents.
}

%-------------------------------------------------------------------------------------
\section{Preliminaries}
\label{sec:preliminaries}
%-------------------------------------------------------------------------------------
% 
We begin by defining formally the framework for auctions to hire a team.
Our framework is based on that of 
\cite{archer:tardos:path-mechanisms:J,bikchandani:devries:schummer:vohra,BeyondVCG,talwar:price-of-truth}.
A \todef{set system} \SetSys is specified by 
a set \Elts of $n$ \todef{elements} and 
a collection $\Feasible \subseteq 2^{\Elts}$ of \todef{feasible sets}.
For instance, in the important special case of an $s$-$t$ path
auction, $S \in \Feasible$ if and only if $S$ is an $s$-$t$ path.
\dkedit{We are only interested in set systems that are \emph{monopoly-free},
in the sense that $\bigcap_{S \in \Feasible} S = \emptyset$, i.e., no
agent is in all feasible sets.}

In previous work on ``hiring a team'' auctions, 
each element $e$ was associated with a different selfish agent. 
Here, we depart from this assumption, 
in that an agent may own \emph{multiple} elements. 
$\Owned{i}$ denotes the set of elements owned by agent $i$,
\dkedit{which is} an element of a partition $\Ownership$ of \Elts.
An \todef{owned set system}, i.e., a set system with ownership structure, 
is specified by $(\SetSys,\Ownership)$.
\dkedit{We use $\Owner{e}$ to denote the owner of element $e$, i.e.,
the unique $i$ such that $e \in \Owned{i}$.}
Each element $e$ has an associated \todef{cost} $c_e$, the true cost
that its owner $\Owner{e}$ will incur if $e$ is selected by the
mechanism.\footnote{For costs, bids, etc., 
we extend the notation by writing $c(S) = \sum_{e \in S} c_e$, $b(S) = \sum_{e \in S} b_e$, etc.}
This cost is \todef{private}, i.e., known only to \Owner{e}.
An \todef{auction} consists of two steps:

\begin{enumerate}
\item Each agent $i$ submits sealed bids $(b_e, \OwnerP{e})$ for
  elements $e$, where \OwnerP{e} denotes the identifier of $e$'s
  purported owner \aiedit{which need not be the actual owner. 
  (However, no agent $i$ can claim ownership of an element $e$
  owned by another agent $i' \neq i$.)}
\item Based on the bids, the auctioneer uses an algorithm that is
  common knowledge among the agents in order to select a feasible set 
  $S^* \in \Feasible$  as the winner and compute a payment \Pay{i} for
  each agent $i$ with an element $e$ such that $i = \OwnerP{e}$. 
  We say that the elements $e \in S^*$ \todef{win}, and all other elements
  \todef{lose}. 
\end{enumerate}
The \todef{profit} of an agent $i$ is the sum of all payments she receives, 
minus the incurred cost $c(S^* \cap \Owned{i})$. Each agent is only
interested in maximizing her profit, and might choose to misrepresent
ownership or costs to this end. However, we assume that agents do not
collude.
% 
% a2c: insert a formal definition of strategy-proofness? 
% 
Past work on incentive compatible mechanisms has focused on
\todef{truthful} mechanisms. That is, the assumption was that each
agent $i$ submits bids only for elements $e \in \Owned{i}$ she actually
owns, and reports correct ownership $\Owner{e} = i$ for all of them. 
If agents report correct ownership for all $e \in \Owned{i}$, then
a mechanism is truthful by definition if for any fixed
vector $b^{-i}$ of bids by all agents other than $i$, it is in agents
$i$'s best interest to bid $b_e = c_e$ for all $e \in \Owned{i}$, 
i.e., agent $e$'s profit is maximized by bidding $b_e = c_e$ for all
these elements $e$. 

In this paper, we extend the study of truthful mechanisms 
to take into account \todef{false-name manipulation}: 
agents claiming ownership of non-existent elements 
(which we call \todef{self-division}) or 
choosing not to disclose ownership of elements 
(which we call \todef{identifier splitting}).
Identifier Splitting is the most natural form of false-name bidding on
the part of an agent, and the one studied in the past for
combinatorial auctions, by Yokoo et al.~\cite{yokoo:aij2001,yokoo:geb:false-name}. 
The notion of self-division is motivated by graph-theoretic problems
(such as shortest paths), when there is uncertainty on the part of the
auctioneer about the underlying set system.

%
% a2c: extends s or k identifiers version??
%
\begin{definition}[Identifier Splitting \cite{yokoo:aij2001,yokoo:geb:false-name}]
An agent $i$ owning a set \Owned{i} may choose to use 
different identifiers in her bid for some or all of the elements. 
Formally, the owned set system \SetSysA is replaced 
by \ensuremath{(\SetSys,\OwnershipP)}\xspace, 
where $\Owned{'} = A \setminus \{\Owned{i}\} \cup \{\Owned{i'}\} \cup \{\Owned{i''}\}$, 
and $\Owned{i} = \Owned{i'} \cup \Owned{i''}$ 
when agent $i$ uses two identifiers $i'$ and $i''$. 
\end{definition}

\begin{definition}[Self-Division]
An agent $i$ owning element $e$ is said to \todef{self-divide} $e$ 
if $e$ is replaced by two or more elements $e_1, \ldots, e_k$, 
and different owners are reported for the $e_i$. Formally, 
the owned set system \SetSysA is replaced by \SetSysAP, 
whose elements are $\EltsP = \Elts \setminus \SET{e} \cup \SET{e_1, \ldots, e_k}$, 
such that the feasible sets \FeasibleP are exactly those sets $S$
not containing $e$, as well as sets 
$S \setminus \SET{e} \cup \SET{e_1, \ldots, e_k}$ 
for all feasible sets $S \in \Feasible$ containing $e$. 
The ownership structure is $\Owned{i_j} = \SET{e_j}$ for $j=1, \ldots, k$, 
where each $i_j$ is a new agent.
\end{definition}

Intuitively, self-division allows an agent to pretend that multiple
distinct agents are involved in doing the work of element $e$, and
that each of them must be paid separately.
For self-division to be a threat, 
there must be uncertainty on the part of the auctioneer 
about the true set system \SetSys. 
In particular, it is meaningless to talk about a mechanism for an
individual set system, as the auctioneer does not know a priori what
the set system is. Hence, we define 
\todef{classes of set systems closed under subdivision}, 
as the candidate classes on which mechanisms must operate.
\begin{definition}

\begin{enumerate}
\item For two set systems \SetSys and \SetSysP, we say \SetSysP is 
\todef{reachable} from \SetSys by subdivisions 
if \SetSysP is obtained by (repeatedly) replacing
individual elements $e \in E$ with $\SET{e_1, \ldots, e_k}$, such that
the feasible sets \FeasibleP are exactly those sets $S$
not containing $e$, as well as sets 
$S \setminus \SET{e} \cup \SET{e_1, \ldots, e_k}$ for all
feasible sets $S \in \Feasible$ containing $e$. 

\item A class \SSClass of set systems is \todef{closed under
    subdivisions} iff with \SetSys, all set systems reachable from
  \SetSys by subdivisions are also in \SSClass.
\end{enumerate}

\end{definition}
For example, $s$-$t$ path auction set systems are closed under subdivisions,
whereas minimum spanning tree set systems are not (because 
subdivisions would introduce new nodes that must be spanned).
On the other hand, minimum Steiner tree set systems 
with a fixed set of terminals are susceptible to 
false-name manipulation.

In both identifier splitting and self-division, 
we will sometimes refer to the new agents $i'$ 
whose existence $i$ invents as \todef{pseudo-agents}. 
A mechanism is \todef{false-name-proof} if it is a dominant strategy 
for each agent $i$ to simply report the pair $(c_e, i)$ as a bid for 
each element $e \in \Owned{i}$. 
Thus, neither identifier splitting nor self-division 
nor bids $b_e \neq c_e$ can increase the agent's profit. 
Among other things, this allows us to use $b_e$ and $c_e$ interchangeably 
when discussing false-name-proof mechanisms. 
Notice that we explicitly define the concept of false-name-proof mechanisms 
to imply that the mechanism is also truthful when each agent $i$ owns only one element.
 
%------------------------------------------------------------------------------
\subsection{Efficiency and Frugality}
%------------------------------------------------------------------------------
In designing and analyzing a mechanism for hiring \dkedit{a team}, 
there are several other desirable properties 
besides being false-name-proof (or at least truthful). 
Two particularly important ones are efficiency and frugality.
% 
% a2c: insert a formal definition of Pareto efficiency
% 
A mechanism is \todef{Pareto efficient} if it always maximizes 
the sum of all participants' utilities (including that of the auctioneer). 
This maximizes social surplus. 
In the case of hiring a team, the auctioneer's utility is exactly $-\sum_i \Pay{i}$, 
the negative of the sum of all payments. 
Hence, all payments cancel out, and a mechanism is Pareto efficient 
if and only if it always purchases the cheapest team or $s$-$t$ path.
While it is well-known that the VCG mechanism is 
truthful and Pareto efficient~\cite{vickrey:counterspeculation,clarke:multipart,groves:incentives},
Du et al.~\cite{du:netecon:2006:J}
show that there is no Pareto efficient and 
false-name-proof mechanism, even for $s$-$t$ path auctions.
Yokoo et al.~\cite{yokoo:geb:false-name} showed the same for
combinatorial auctions. 

While Pareto efficient mechanisms maximize social welfare, 
they can significantly overpay compared to other mechanisms~\cite{BeyondVCG,elkind:sahai:steiglitz}.
In order to analyze the overpayment, 
we use the definition of \todef{frugality ratio}
from~\cite{BeyondVCG}.
The idea of the frugality ratio is to compare
the payments to a ``natural'' lower bound, generalizing the idea of
the second lowest cost. (It is easy to observe that no meaningful ratio
is possible when comparing to the actual lowest cost.)

\begin{definition}[\cite{BeyondVCG}] \label{def:chnash}
Let \SetSys be a set system, and
\dkedit{\costs a cost vector for the elements}.
Let $S$ be a cheapest feasible set with respect to the $c_e$ 
(where ties are broken lexicographically).
We define \ChNash{\costs} to be the solution to the following optimization problem.
\[ 
\begin{array}{l}
\mbox{Minimize } \sum_{e\in S} x_e
\mbox{ subject to} \\[1ex]
 (1) \; x_e \geq c_e \;\; \mbox{ for all } $e$\\[1ex]
 (2) \; x(S \setminus T) \leq c(T \setminus S)
         \;\; \mbox{ for all } T \in \Feasible\\[1ex]
 (3) \; \mbox{For every $e\in S$, there is a $T_e \in \Feasible$ such that}\\
         \phantom{(3) \; }e \notin T_e \mbox{ and }
         x(S \setminus T_e) = c(T_e\setminus S)
\end{array} 
\]
\end{definition}

This definition essentially captures the payments 
in a ``cheapest Nash equilibrium'' of a first-price auction, 
and gives a natural lower bound generalizing second-lowest cost for comparison purposes. 

\begin{definition} \label{def:frugality}
The frugality of a mechanism \mech for a set system \SetSys is 
\begin{eqnarray*}
\MechRat{\mech}
& = & \sup_{\costs} \frac{\MechPay[\mech]{\costs}}{\ChNash{\costs}},
\end{eqnarray*}
i.e., the worst case, over all cost vectors \costs, of the overpayment
compared to the ``first-price'' payments.
Here, \MechPay[\mech]{\costs} denotes the total payments made by \mech
when the cost vector is \costs.
\end{definition}

%-------------------------------------------------------------------------------------
\section{A Multiplicative Penalty Mechanism}
\label{sec:division}
%-------------------------------------------------------------------------------------
In this section, 
we focus on a mechanism \EP with \emph{multiplicative penalties}, 
\dkedit{as well as lower bounds,}
for arbitrary ``hiring a team'' instances. 
The \EP mechanism always buys a solution, 
and \dkedit{so long as each agent owns} one element only,
it is false-name proof.\footnote{In fact, \EP works 
even if an agent owns multiple elements, 
so long as all of these elements are required at the same time. 
In other words, if we can consider a set of elements 
as a virtual single element, \EP is false-name-proof.}
We analyze the frugality ratio of \EP for arbitrary instances, 
and prove that it is \dkedit{at most $2^n$}, 
matching \dkedit{--- up to constants ---} a lower bound of $\Omega(2^n)$ 
for any false-name-proof mechanism. 
%------------------------------------------------------------------------------
\subsection{The Mechanism \EP} \label{sec:GUSC}
%------------------------------------------------------------------------------
\dkedit{The} mechanism \EP is based on exponential multiplicative penalties. 
It is false-name-proof for arbitrary classes of set systems 
closed under subdivisions, 
\emph{so long as each agent only owns one element} 
\dkedit{(In other words, it guards against self-division by agents)}. 
We can therefore identify elements $e$ with agents.
Since we assume \dkedit{that} each agent owns exactly one element, 
\Ownership is automatically determined by \Elts, so we can focus on
set systems instead of owned set systems.

After the agents submit bids $b_e$ for elements, 
\EP chooses the set $S^*$ minimizing $b(S) \cdot 2^{\SetCard{S}-1}$, among
all feasible sets $S \in \Feasible$. Each agent $e \in S^*$ is then
paid her threshold bid 
$2^{\SetCard{S^{-e}}-\SetCard{S^*}} b(S^{-e}) - b(S^* \setminus \SET{e})$,
where $S^{-e}$ \dkedit{denotes} the best solution (with respect to the objective
function $b(S) \cdot 2^{\SetCard{S}-1}$) among feasible sets $S$ not
containing $e$. 
Notice that while this selection may be NP-hard in general, it
can be accomplished in polynomial time for path auctions, by using
the Bellman/Ford algorithm to compute the shortest path for each
number of hops, and then \dkedit{choosing from} the at most $n$ such
shortest paths.

\begin{theorem}
\label{thm:upper bound}
For all classes of set systems closed under subdivision, 
\EP is false-name-proof, so long as each agent only owns one
element. Furthermore, it has frugality ratio 
\dkedit{$O(2^n)$}, where $n=\SetCard{\Elts}$.
\end{theorem}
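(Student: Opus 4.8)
The theorem has two parts: false-name-proofness (when each agent owns one element) and the $O(2^n)$ frugality bound. Let me sketch how I'd prove each.

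For false-name-proofness, I need to show two things. First, truthfulness: since $\EP$ selects the set minimizing $b(S) \cdot 2^{\SetCard{S}-1}$ and pays threshold bids, this is a standard monotone-selection-plus-threshold-payment argument. I'd verify that the allocation rule is monotone in each $b_e$ (lowering your bid can only help you get selected), so the threshold payment makes truthful bidding dominant. Second, resistance to self-division: suppose agent $e$ with true cost $c_e$ splits $e$ into $e_1, \ldots, e_k$, reporting costs $b_{e_1}, \ldots, b_{e_k}$ summing to some value $\sigma$ (the agent should bid its true total cost on the pseudo-agents to have a chance, but I'd argue about arbitrary bids). The key observation is that a feasible set $S$ containing $e$ becomes $S' = S \setminus \{e\} \cup \{e_1,\ldots,e_k\}$, which has $\SetCard{S'} = \SetCard{S} + k - 1$, so its objective value gets multiplied by $2^{k-1}$. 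The agent's total payment is the sum of the $k$ threshold payments for $e_1, \ldots, e_k$; I need to show this sum, minus $c_e$ if selected, is at most the single threshold payment $e$ would receive without splitting, minus $c_e$. The crucial point is that the exponential blow-up of the penalty exactly compensates: I expect the argument to show that the sum of threshold bids for the $k$ pieces is at most the original threshold bid, because each $S^{-e_j}$ competitor either avoids all pieces (in which case it's essentially a competitor to the undivided $e$) or the penalty factor $2^{k-1}$ on sets through $e$ makes splitting strictly unprofitable. This is the step I expect to be the main obstacle — carefully bounding the sum of post-split threshold payments.

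For the frugality ratio, I would fix a cost vector $\costs$, let $S$ be the lex-min cheapest feasible set (the one appearing in Definition 1), and let $S^*$ be the set $\EP$ actually selects. I'd bound the total payment $\sum_{e \in S^*} \Pay{e}$ from above and relate it to $\ChNash{\costs}$. First note that $b(S^*) \cdot 2^{\SetCard{S^*}-1} \le b(S) \cdot 2^{\SetCard{S}-1} \le c(S) \cdot 2^{n-1}$ since $\SetCard{S} \le n$. The threshold payment to $e \in S^*$ is $2^{\SetCard{S^{-e}} - \SetCard{S^*}} b(S^{-e}) - b(S^* \setminus \{e\})$; summing over $e \in S^*$, the negative terms contribute $-(\SetCard{S^*}-1) b(S^*)$ and the positive terms are each bounded using $b(S^{-e}) \cdot 2^{\SetCard{S^{-e}}-1} \le b(S^{*,-e}) \cdot 2^{\SetCard{S^{*,-e}}-1}$ — i.e., $S^{-e}$ beats the modified version of any competitor avoiding $e$. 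I'd use one of the sets $T_e$ from constraint (3) of Definition 1 (or rather the competitor witnessing $x(S \setminus T_e) = c(T_e \setminus S)$) as a specific feasible set not containing $e$ to plug in, picking up at most a factor $2^{\SetCard{T_e}-1} \le 2^{n-1}$ from the penalty on that competitor. Combining, each payment is bounded by roughly $2^{n-1}$ times a quantity controlled by $\CHNASH$, and summing over the at most $n$ winning elements and absorbing polynomial factors into the $O(\cdot)$ — actually, I'd need to be more careful to get $2^n$ and not $n \cdot 2^n$, so the summation over winners should telescope or be bounded collectively rather than term by term.

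The two parts are essentially independent, so I would present false-name-proofness first (Lemma-style), then the frugality bound. Throughout I would exploit that, because $\EP$ is false-name-proof, we may treat bids and costs as equal, simplifying the frugality analysis to reasoning purely about $c_e$. The delicate points are: (i) in the self-division argument, handling the interaction between the penalty factor and the fact that an opponent set $S^{-e_j}$ for one piece may differ from $S^{-e_{j'}}$ for another; and (ii) in the frugality argument, ensuring the comparison to $\ChNash{\costs}$ uses constraints (2) and (3) correctly so that the final bound is $O(2^n)$ rather than something with an extra $n$ factor. I'd tackle (i) by observing that the optimal post-split configuration never wants more than one piece to be "expensive," reducing to the single-element threshold comparison, and (ii) by a global accounting of $\sum_e \Pay{e}$ against $\CHNASH$ rather than a per-element bound.
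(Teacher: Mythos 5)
Your overall plan follows the same route as the paper's proof, but at both of the points you yourself flag as the crux, the resolving idea is missing rather than merely unpolished. For self-division, the observation you need is that the pseudo-elements $e_0,\ldots,e_k$ into which $e$ subdivides occur \emph{together} in every feasible set of the new system: a set contains all of them or none. Consequently all the competitor sets $S^{-e_j}$ coincide (each is simply the best feasible set avoiding the original $e$), the group has a single collective threshold $\tau_k(e)$ on the \emph{sum} of the pieces' bids, and each individual piece is paid at most $\tau_k(e)$. Since subdividing adds $k$ elements to every feasible set through $e$, the objective of those sets is multiplied by $2^k$, hence $\tau_k(e)\le 2^{-k}\tau(e)$, and the total payment is at most $(k+1)2^{-k}\tau(e)\le\tau(e)$, which is already attainable by the undivided agent bidding $0$. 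Your proposed reduction (``the optimal post-split configuration never wants more than one piece to be expensive'') is about how the bid is distributed among the pieces, which is not the issue; the quantity to control is the number of separate threshold payments collected, and without the all-or-nothing observation your bound on $\sum_j$ of the post-split thresholds does not close.

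For the frugality bound, the extra factor of $n$ you worry about is not removed by telescoping or by a global accounting; it is absorbed by the penalty itself. Bounding each threshold by $2^{\SetCard{T}-1}\le 2^{n-1}$ as you propose does leave a stray factor of $\SetCard{S^*}$. The correct accounting keeps the exponent $n-\SetCard{S^*}$: the total payment is at most $2^{n-\SetCard{S^*}}\sum_{e\in S^*}\min_{T\in\Feasible:\,e\notin T}c(T)\le \SetCard{S^*}\,2^{n-\SetCard{S^*}}\max_{e\in S^*}\min_{T\not\ni e}c(T)$, and the elementary inequality $\SetCard{S^*}\le 2^{\SetCard{S^*}}$ kills the linear factor. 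You also omit the bridging step between the winning set $S^*$ and the cheapest set $S$ of Definition~\ref{def:chnash}: for $e\in S^*\setminus S$ the minimizing $T$ is $S$ itself, so $\max_{e\in S^*}\min_{T\not\ni e}c(T)\le\max_{e\in S}c(T_e)$, while constraints (1) and (3) give $\ChNash{\costs}\ge\max_{e\in S}c(T_e)$. Combining these yields the ratio $2^n$ with no residual polynomial factor.
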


% \begin{emptyproof}
\begin{proof}
If an agent $e=e_0$ self-divides into $k+1$ elements $e_0, \ldots,
e_k$, then either all of the $e_i$ or none of them are included in any
feasible set $S$. Thus, we can always think of just one threshold
$\tau_k(e)$ for the self-divided agent $e$: if the sum of the bids of
all the new elements $e_j$ exceeds $\tau_k(e)$, then $e$ loses;
otherwise, it is paid at most $(k+1) \tau_k(e)$. 
The original threshold of agent $e$ is $\tau(e) = \tau_0(e)$.

The definition of the \EP mechanism implies that
$\tau_k(e) \leq 2^{-k} \tau(e)$. 
If $e$ still wins after self-division
(otherwise, there clearly is no incentive to self-divide), the total
payment to $e$ is at most $(k+1) 2^{-k} \tau(e)$. 
The alternative of not self-dividing, and submitting a bid of 0, 
yields a payment of $\tau(e) \geq (k+1) 2^{-k} \tau(e)$. 
Thus, refraining from self-division is a dominant strategy. 
Given that no agent will submit false-name bids, 
the monotonicity of the selection rule implies that the
mechanism is incentive compatible, 
and we can assume that $b_e = c_e$ for all agents $e$. 

To prove the upper bound on the frugality ratio, 
consider again any winning agent $e \in S^*$. Her threshold bid is
\begin{eqnarray*}
\tau(e) 
& = & \min_{T \in \Feasible: e \notin T} 2^{\SetCard{T}-\SetCard{S^*}} c(T)
      - c(S^* \setminus \SET{e}),
\end{eqnarray*}
and the total payment is the sum of individual thresholds for $S^*$,
\begin{eqnarray*}
\MechPay[\EP]{\costs}
& = & \sum_{e \in S^*} \min_{T \in \Feasible: e \notin T} 
      2^{\SetCard{T}-\SetCard{S^*}} c(T)
      - c(S^* \setminus \SET{e}) \\
& \leq & 2^{\dkedit{n-\SetCard{S^*}}} \sum_{e \in S^*} \min_{T \in \Feasible: e \notin T} c(T).
\end{eqnarray*}

To obtain \dkedit{the frugality ratio} from this upper bound on the payments,
we need a lower bound on the value \ChNash{\costs} 
(see Definition~\ref{def:frugality}). 
Let $S$ be the cheapest solution with respect to the $c_e$, i.e.,
without regard to the sizes of the sets.
By Definition \ref{def:chnash}, 
$\ChNash{\costs} = \sum_{e \in S} x_e$, subject to the constraints of
the mathematical program given.
Focusing on any fixed agent $e'$, we let $T_{e'}$ denote the set from
the third constraint of Definition \ref{def:chnash}, and can rewrite
\begin{equation}
  \begin{array}{lclclcl}
    \ChNash{\costs}
    & = & \sum_{e \in S \setminus T_{e'}} x_e + \sum_{e \in S \cap T_{e'}} x_e\\
    & = & \sum_{e \in T_{e'} \setminus S} c_e + \sum_{e \in T_{e'} \cap S} x_e
    & \geq & c(T_{e'}).
  \end{array} \label{eqn:nash-bound}
\end{equation}

Since this inequality holds for all $e'$, 
we have proved that $\ChNash{\costs} \geq \max_{e \in S} c(T_e)$. 
On the other hand, we can further bound the payments by
\begin{eqnarray*}
%\MechPay[\EP]{\costs} & \leq & 
2^{\dkedit{n-\SetCard{S^*}}} 
\sum_{e \in S^*} \min_{T \in \Feasible: e \notin T} c(T)
& \leq & \dkedit{\SetCard{S^*}} \cdot 2^{\dkedit{n-\SetCard{S^*}}} 
         \cdot \max_{e \in S^*} \min_{T \in \Feasible: e \notin T} c(T)\\
& \leq & \dkedit{\frac{\SetCard{S^*}}{2^{\SetCard{S^*}}}} \cdot 2^n 
         \cdot \max_{e \in S} \min_{T \in \Feasible: e \notin T} c(T)\\
& \leq & 2^n \cdot \max_{e \in S} c(T_e).
\end{eqnarray*}

Here, the \dkedit{middle} inequality followed because for all 
$e \in S^* \setminus S$, the minimizing set $T$ is actually equal to
$S$, and therefore cannot have larger cost than $c(T_e)$ for any 
$e \in S$, by definition of $S$.
Thus, the frugality ratio of \EP is
\[ 
\MechRat{\EP} \; = \; \sup_{\costs} \frac{\MechPay[\EP]{\costs}}{\ChNash{\costs}}
\; \leq \; \frac{2^n \max_{e \in S} c(T_e)}{\max_{e \in S} c(T_e)} 
\; = \; 2^n. %\hfill  \QED
\]
% \end{emptyproof}
\end{proof}
%------------------------------------------------------------------------------
\subsection{An Exponential Lower Bound}
%------------------------------------------------------------------------------
An exponentially large frugality ratio is not desirable. 
Unfortunately, any mechanism which is false-name-proof will have to
incur such a penalty, as shown by the following theorem.

\begin{theorem} \label{thm:lower-bound}
Let \SSClass be any class of monopoly free set systems 
closed under subdivisions, 
and \mech be any truthful and false-name-proof mechanism for \SSClass. 
Then, the frugality ratio of \mech on \SSClass is $\Omega(2^n)$ 
for set systems with $\SetCard{E} = n$.
\end{theorem}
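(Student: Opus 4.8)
The plan is to construct, for each $n$, a small family of set systems in $\SSClass$ that are related to one another by subdivisions, and to play the false-name-proofness constraint off against truthfulness across this family. The starting point is a monopoly-free system; since $\SSClass$ is closed under subdivisions, I may subdivide one element repeatedly to obtain a ``chain'' of $k$ serial pseudo-elements $e_1,\dots,e_k$ (all in or all out of every feasible set), alongside at least one alternative feasible set avoiding the original element (monopoly-freeness guarantees such an alternative exists). First I would fix costs so that the chain-path is the unique cheapest feasible set by a tiny margin, and so that the cheapest alternative has cost exactly some value $C$. In this configuration $\ChNash{\costs}$ is $\Theta(C)$ — essentially $C$ — because the only binding competitor is the single alternative set.

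Next I would track how the per-agent thresholds must behave. Here is the key leverage: because $\mech$ is false-name-proof, an agent owning the original undivided element $e$ can simulate being the chain $e_1,\dots,e_k$; false-name-proofness forces the total payment the mechanism makes to the chain to be \emph{at least} the payment it would make to the single agent $e$ (otherwise $e$ would prefer not to subdivide — but here we need the reverse direction, so I would instead run the argument starting from the subdivided system and use that the single agent must not be able to profit, which bounds the single-agent payment \emph{below} by the chain payment, or rather relate them the other way). More carefully: consider the subdivided system with $k$ chain elements as the ``true'' system. Truthfulness pins each winning agent's payment to its threshold bid. I would argue that by symmetry among the $k$ identical chain elements (they can be made to have equal costs and are interchangeable), each must receive the same threshold payment $p$, so the total chain payment is $kp$. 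Now I invoke false-name-proofness in the other direction: if instead the true system had the single element $e$ with cost equal to the chain's total, the agent owning $e$ could self-divide into the $k$-chain and collect $kp$; false-name-proofness forbids this from beating the honest single-element payment $p_1$, giving $p_1 \ge kp$. But truthfulness/individual rationality forces $p_1 \le $ (roughly) the competitor cost $C$, hence $kp \le C$ — that is the \emph{wrong} direction and shows nothing is large.

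So the argument must instead be: make the \emph{single-element} system the true one, with the alternative set priced at $C$; then $p_1$ can be forced to be $\Theta(C)$ by a threshold/competition argument, and $\ChNash{}$ in \emph{that} system is also $\Theta(C)$ — again no gap. The actual mechanism for the lower bound has to be subtler: I would instead consider a family where the mechanism, to stay false-name-proof across \emph{all} subdivision depths simultaneously, is forced to pay the undivided agent at least $k$ times what it pays each sub-agent for \emph{every} $k$, while $\ChNash{}$ is computed in a fixed shallow system. Concretely, take the system with a $1$-element ``short'' path and a $2$-element ``long'' path, cost $1+\epsilon$ on the short element and $\epsilon$ on each long element, so the long path wins and $\ChNash{}=\Theta(1)$; now the two long elements could each be further subdivided, and an agent contemplating owning one long element knows that whatever per-agent payment $q$ it would get, it could further self-divide into a $j$-chain and get $\approx jq$, which false-name-proofness caps; chaining this reasoning down $n$ levels, each level is forced to roughly double the payment at the level above to keep deeper self-division unprofitable at shallower scales — no, to keep it \emph{profitable-free} the shallow payment must \emph{exceed} the deep one, forcing the top-level payment to be $\Omega(2^{\text{depth}})$ times the bottom, i.e.\ $\Omega(2^n)$ relative to the $\Theta(1)$ benchmark.

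\medskip

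\textbf{The main obstacle.} The genuinely hard part is getting the inequality to point the right way: false-name-proofness says subdividing must not \emph{help}, which by itself only upper-bounds the divided payment by the undivided one, and that does not force large payments. The resolution — and the technical heart I would need to nail down — is a recursive/telescoping construction in which the honest configuration at depth $i$ is \emph{itself} reachable by subdivision from a shallower system, so that an agent at depth $i$ could pretend to be the depth-$0$ monopolist-ish element; bounding that pretense from being profitable forces payment$(i) \ge 2\cdot$payment$(i{+}1)$ at every level, which telescopes to the $2^n$ factor. Making this precise requires carefully choosing the costs at every level so that (a) each intermediate system is monopoly-free and lies in $\SSClass$, (b) the same feasible set wins at every level, (c) the $\ChNash{}$ benchmark stays $\Theta(1)$ independent of $n$, and (d) the ``pretend to be a shallower element'' deviation is available and its profit is exactly the quantity false-name-proofness must kill. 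Verifying (a)–(d) simultaneously, and handling tie-breaking and the additive-versus-multiplicative bookkeeping of the thresholds, is where the real work lies; once the recursion payment$(i)\ge 2\,$payment$(i{+}1)$ is established, the $\Omega(2^n)$ conclusion is immediate.
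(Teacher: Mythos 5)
You have correctly isolated the one usable consequence of false-name-proofness --- that an agent's threshold must dominate the sum of the thresholds of the pseudo-agents it could split into, so total payment can only shrink under subdivision --- but your plan for converting this into a frugality bound points the wrong way, and this is a genuine gap rather than a bookkeeping issue. Your telescoping inequality, payment at depth $i$ at least twice payment at depth $i+1$, yields only that the \emph{deep} thresholds are exponentially \emph{small} compared to the fixed top-level threshold $\tau(0,0)$; it cannot force the top-level payment to be large, since that payment is whatever finite value \mech happens to assign on the base system, and nothing lower-bounds the bottom of the telescope. Your requirement (c), that the benchmark \ChNash{\costs} stay $\Theta(1)$ independent of $n$, is therefore exactly backwards. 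The argument that actually works evaluates the frugality ratio \emph{in the deeply subdivided system}: after $d$ subdivisions some pseudo-agent $e_h$ has threshold at most $2^{-d}\tau(0,0)$, so giving $e_h$ a true cost just above that threshold (and cost $0$ to the other chain elements) makes the cheap chain \emph{lose}; the mechanism is then committed, by truthfulness, to selecting an alternative set containing an element $\hat{e}$ whose threshold is at least $1$, and to paying it at least $1$, while $\ChNash{\costs}$ has collapsed to roughly $2^{-d}\tau(0,0)$. The blowup comes from the benchmark shrinking against a payment that stays $\Omega(1)$, not from a payment growing against a fixed benchmark. This final step --- using the exponentially small threshold to build a cost vector on which the winner flips --- is entirely absent from your proposal and is the heart of the proof.

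A second, smaller gap: you assume the $k$ chain pseudo-agents receive equal threshold payments ``by symmetry.'' A truthful mechanism is under no obligation to treat interchangeable pseudo-agents symmetrically, and the paper's proof is structured precisely to avoid this assumption: it tracks a window of $k+1$ consecutive indices, observes that summing the splitting inequality over the window dominates the sum of two overlapping windows one level deeper, and recurses into whichever of the two is smaller. That asymmetric window-selection is what makes the induction go through without any symmetry hypothesis, and you would need it (or a substitute) to complete the argument.
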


\begin{proof}
Let $(E_0, \Feasible_0) \in \mathcal{C}$ be a set system minimizing
$\SetCard{E_0}$. 
Let $S^* \in \Feasible_0$ be the winning set under \mech winning 
when all agents $e \in E_0$ bid 0, 
and let $e \in S^*$ be arbitrary, but fixed.
Because $(E_0, \Feasible_0)$ is monopoly free, there must be 
a feasible set $T \in \Feasible_0$ with $e \notin T$ and $T \not\subseteq S^*$. 
Among all such sets $T$, let $T_e$ be one minimizing $\SetCard{S^* \cup T}$, 
and let $\hat{e}$ in $T_e \dkedit{\setminus S^*}$ be arbitrary. 
Define $Z = (T_e \cup S^*) \setminus \SET{e,\hat{e}}$ (the ``zero bidders''), 
and $I = E_0 \setminus (T_e \cup S^*)$ (the ``infinity bidders'').
Consider the following bid vector: both $e$ and $\hat{e}$ bid $1$, 
all agents $e' \in Z$ bid $0$, and all agents $e' \in I$ bid $\infty$. 
Let $W$ be the winning set. 
We claim that $W$ must contain at least one of $e$ and $\hat{e}$ 
(w.l.o.g., assume that $e \in W$). 
For $W$ cannot contain any of the infinity bidders. 
% AI: The sentence ``For $W$ cannot contain any of the infinity bidders.'' was understandable but unclear for me. But, it's OK to get it back. 
%
% We claim that $W$ must contain at least one of $e$ and $\hat{e}$ 
% (w.l.o.g., assume that $e \in W$) 
% \aiedit{and that $W$ cannot contain any of the infinity bidders. }
And if it contained neither $e$ nor $\hat{e}$, 
then $W$ would have been a candidate for $T_e$ 
with smaller $\SetCard{W \cup S^*}$, which would contradict the choice of $T_e$. 

Now, let $(E_k, \Feasible_k)$ be the set system resulting if 
agent $e$ self-divides into new agents $e_0, \ldots, e_k$, for $k \geq 0$. 
\suggestedcut{
Thus, all agents $e' \neq e$ stay the same, 
and the feasible sets are exactly those $S \in \Feasible_0$ with $e \notin S$, 
and those $S$ with $\SET{e_0, \ldots, e_k} \subseteq S$ 
and $S \setminus \SET{e_0, \ldots, e_k} \cup \SET{e} \in F_0$. 
Let $(E_k, \Feasible_k)$ denote the set system resulting 
from $k$-wise self-division (for $k \geq 0$). 
}
Define $\tau(j,k)$, for $j=0, \ldots, k$, to be the threshold bid 
under \mech for agent $e_j$ in the set system $(E_k,\Feasible_k)$, 
given that all $e' \in Z$ bid 0, all $e' \in I$ bid $\infty$, 
and all $e_i$ for $i \neq j$ also bid 0, while $\hat{e}$ bids 1. 
Above, we thus showed that $1 \leq \tau(0,0) < \infty$. 
We now show by induction on $d$ that for all $d$, 
there exists an $h \leq d$ such that
\begin{eqnarray*}
2^{-d} \sum_{i=0}^{k} \tau(i,k) 
& \geq & \sum_{i = h}^{k+h}\tau(i, k+d).
\end{eqnarray*}

The base case $d=0$ is trivial. For the inductive step, 
assume that we have proved the statement for $d$. 
Because $\mech$ is truthful,
the payment of an agent is exactly equal to the threshold bid, so each
agent $i$ is paid $\tau(i,k+d)$ in the auction on the set system
$(E_{k+d},\Feasible_{k+d})$ with the bids as given above.
If agent $i$ were to self-divide into two new agents, 
the new set system would be $(E_{k+d+1},\Feasible_{k+d+1})$, 
and the payment of agent $i$ 
(who is now getting paid as two pseudo-agents $i$ and $i+1$) 
would be $\tau(i,k+d+1) + \tau(i+1,k+d+1)$. 
Because $\mech$ was assumed to be false-name-proof, 
it is not in the agent's best interest to self-divide in such a way, 
i.e., $\tau(i, k+d) \geq \tau(i, k+d+1) + \tau(i+1, k+d+1)$. 
Summing this inequality over all agents $i=h,\ldots,h+k$, we obtain
\begin{eqnarray*}
\sum_{i=h}^{h+k} \tau(i, k+d) 
& \geq & 
\sum_{i=h}^{h+k} (\tau(i, k+d+1) + \tau(i+1, k+d+1))\\
& = & 
\sum_{i=h}^{h+k} \tau(i, k+d+1) + \sum_{i=h+1}^{h+k+1} \tau(i, k+d+1). 
\end{eqnarray*}

Define $\ell = 0$ if 
$\sum_{i=h}^{h+k} \tau(i, k+d+1) \leq \sum_{i=h+1}^{h+k+1}\tau(i, k+d+1)$;
otherwise, let $\ell = 1$.
Then, the above inequality implies that
\begin{eqnarray*}
\sum_{i=h}^{h+k} \tau(i, k+d) 
& \geq & 2\sum_{i=h+\ell}^{h+k+\ell} \tau(i, k+d+1).
\end{eqnarray*}
Finally, setting $h' := h + \ell$, we can combine this inequality with
the induction hypothesis to obtain that
\begin{eqnarray*}
2^{-(d+1)} \sum_{i=0}^{k} \tau(i,k) 
& \geq & \sum_{i = h'}^{k+h'}\tau(i,k+d+1),
\end{eqnarray*}
which completes the inductive proof.

Applying this equation with $k=0$, 
we obtain that for each $d \geq 0$, 
there exists an $h \leq d$ such that $\tau(h,d) \leq 2^{-d} \cdot \tau(0,0)$. 
Thus, in the set system $(E_d, \Feasible_d)$, 
if all infinity bidders have cost $\infty$, 
agent \dkedit{$e_h$} has cost just above $2^{-d} \tau(0,0)$, 
and all other agents have cost $0$, 
then agent $\hat{e}$ must be in the winning set, 
and must be paid at least 1. 
But it is easy to see that in this case, $\nu(c) = 2^{-d} \tau(0,0)$, 
and the frugality ratio is thus at least $2^d/\tau(0,0) = \Omega(2^d)$
(since $\tau(0,0)$ is a constant independent of $d$). 
Finally, $\SetCard{E_d} = \SetCard{Z} + \SetCard{I} + d \dkedit{+ 2}$, and
because $Z$ and $I$ are constant for our class of examples, 
the frugality ratio is 
$2^{-(\SetCard{Z}+\SetCard{I}\dkedit{+2})} \cdot 2^n /\tau(0,0) = \Omega(2^n)$.
\end{proof}

In this section, we presented the \EP mechanism based on multiplicative penalties. 
\EP always buys a feasible set. 
However, \EP is \dkedit{guaranteed to be} false-name-proof
\dkedit{only} when each agent \dkedit{owns a single} element.
At this point, we do not know if there exist any false-name-proof mechanisms 
against identifier splitting which \emph{always} buy a set at finite cost. 
This is an intriguing open question for future work. 

%------------------------------------------------------------------------------
\section{An Additive Penalty Mechanism with Reserve Cost}\label{sec:splitting}
%------------------------------------------------------------------------------
We next propose another false-name-proof mechanism \RP based on
additive penalties and a reserve cost. 
The mechanism requires no assumption on whether agents have single or
multiple elements in a set system, and we will prove that it is
false-name-proof 
even when agents own multiple elements. 
However, \RP does not always purchase a feasible set; it requires
the auctioneer to decide on a \todef{reserve cost}, and will only
purchase a solution if there is a feasible solution whose cost
(including penalties) does not exceed the reserve cost. 

We can interpret the reserve cost as an upper bound on the cost
(including penalties) the auctioneer is willing to pay. 
This is particularly reasonable if we assume that the auctioneer
already has a way of performing the task using a single agent of cost $r$, 
such as a direct \dkedit{edge $(s,t)$} with cost $r$ in a network.
If the bids by agents are such that the auctioneer chooses this alternative, 
then none of the agents (including the auctioneer) receives positive utility.
Clearly, the right choice of the reserve cost $r$ will be crucial for
the performance of the mechanism.
%------------------------------------------------------------------------------
\subsection{The \RP mechanism}
%------------------------------------------------------------------------------
\suggestedcut{
We next propose and analyze a mechanism called \RP, based on additive
penalties and a reserve cost. It will only purchase a solution when
the total cost (including penalties) does not exceed the a priori
chosen reserve cost $r$, and thus requires a judicious choice of $r$
by the auctioneer. 
In return, \RP is false-name-proof even when agents own multiple elements. 
}

The \RP mechanism is based on adding to the reported costs of the agents 
a penalty growing in the number of agents participating in a solution. 
For any set $S\in \Feasible$, let \len{S} denote the number of 
(pseudo-)agents owning one or more elements of $S$, called 
the \todef{width} of the set $S$.
The width-based \todef{penalty} for a set $S$ of width \len{S} 
is $\LPen{r}{\len{S}} = \dkedit{(1-2^{1-\len{S}})} \cdot r$.
Based on the \dkedit{reported} costs and the penalty, 
we define the \todef{adjusted cost} of a set $S$ to be
$\Adj{S} = b(S) + \LPen{r}{\len{S}}$. 

The \RP mechanism first determines the set $S^*$ minimizing 
the adjusted cost \Adj{S}, 
among all feasible sets $S\in{\Feasible}$. 
If its adjusted cost exceeds the reserve cost $r$, 
then \RP does not purchase any set, and does not pay any agents. 
Otherwise, it chooses $S^*$, and pays each winning agent 
(i.e., each agent $i$ with $S^* \cap \Owned{i} \neq \emptyset$) 
her threshold bid 
\begin{eqnarray*}
\Pay{i} & = & \min (r, \Adj{S^{-i}}) 
- \big(b(S^{*}\setminus \Owned{i}) + \LPen{r}{\len{S^*}}\big)
\end{eqnarray*}
with respect to \Adj{S}. Here, 
$S^{-i}$ denotes the best solution with respect to \Adj{S} such
that $S^{-i}$ contains no elements from \Owned{i}. 
\suggestedcut{
Notice that the set
$S^*$ selected by \RP will frequently differ from that selected by
VCG, as \RP takes the number of agents in a solution into account.
}

Notice that if we assume \dkedit{that}
the auctioneer requires an additional cost 
of $(1-2^{1-\len{S}}) \cdot r$ for handling a team $S$,
then \RP is identical to the VCG mechanism with reserve cost $r$, 
since the adjusted cost becomes the true total cost 
(including the additional cost of the auctioneer).
Thus, if we assume that there exists no false-name manipulation,
it is natural that \RP is incentive compatible
since it is one instance of VCG. 

\begin{example}{}
Consider the example in Figure~\ref{fig:AP-example}. Assume that
the reserve cost is $r=10$. If agent $X$ does not split identifiers,
the adjusted cost of the path $s$-$v$-$t$ is 2 (since it only involves
one agent, the penalty is 0), and the adjusted cost of the edge
$s$-$t$ is 8. Thus, the payment to agent $X$ is 8.

\begin{figure}[htb]
  \begin{center}
%    \begin{comment}
    %\input{eg}
    \psset{unit=1.0cm} \pspicture(-2.5,-0.7)(2.5,1.3)
    \cnodeput(-2,0){s}{$s$} \cnodeput(2,0){t}{$t$}
    \cnodeput(0,0){v}{$v$} 

    \ncline{->}{s}{v}\Bput{1($X$)} 
    \ncline{->}{v}{t}\Bput{1($X$)}
    \ncarc[arcangleA=40,arcangleB=40]{->}{s}{t}\Aput{8($Y$)}
    \endpspicture
%    \end{comment}
    \caption{An example of \RP.} \label{fig:AP-example}
  \end{center}
\end{figure}

If agent $X$ instead uses two different identifiers $X'$ and $X''$ for
the two edges, the penalty for the path $s$-$v$-$t$ is $10/2 = 5$.
Thus, while the path still wins, the payment to each of $X'$ and $X''$
is now $8- (1+5) = 2$, so the total payment to agent $X$ via
pseudo-agents is 4. In particular, agent $X$ has no incentive to split
identifiers in this case.
\end{example}

%------------------------------------------------------------------------------
\subsection{Analysis of \RP}
%------------------------------------------------------------------------------
In this section, 
we prove that simply submitting the pair $(b_e, i)$ 
for each element $e \in \Owned{i}$ is a dominant strategy for each agent $i$ 
under the mechanism \RP. 
Furthermore, we prove that the payments of the \RP mechanism never
exceed $r$. As a first step, we prove that it never increases an
agent's profit to engage  in identifier splitting.

\begin{lemma}\label{lem:identifier-splitting}
Suppose that agent $i$ owns elements \Owned{i}, and splits identifiers into
$i', i''$, with sets $\Owned{i'}, \Owned{i''}$, such
that $\Owned{i'} \cup \Owned{i''} = \Owned{i}$. Then, the profit agent
$i$ obtains after splitting is no larger than that obtained before
splitting.
\end{lemma}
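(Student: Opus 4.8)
Write-up plan for Lemma~\ref{lem:identifier-splitting}.

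The plan is to use the fact (noted in the paper) that \RP is exactly the VCG mechanism with reserve cost $r$ for the ``extended'' problem in which the auctioneer incurs an extra handling cost $\LPen{r}{\len{S}}$ for a team $S$. Writing $V := \min\bigl(r,\ \min_{S\in\Feasible}\Adj{S}\bigr)$ for the optimal adjusted cost capped at $r$, and $V^{-i} := \min\bigl(r,\ \min_{S\in\Feasible,\ S\cap\Owned{i}=\emptyset}\Adj{S}\bigr)$ for the best adjusted cost over sets avoiding all of $i$'s elements, one reads off from the payment rule the standard VCG identity: agent $i$'s profit equals $V^{-i}-V$ in every case. Indeed, when $i$ wins we have $\Adj{S^*}=V\le r$ and $\Pay{i}=V^{-i}-b(S^*\setminus\Owned{i})-\LPen{r}{\len{S^*}}$, so $\Pay{i}-c(S^*\cap\Owned{i})$ simplifies to $V^{-i}-\Adj{S^*}=V^{-i}-V$ (using reported costs equal to true costs, per the paper's convention); when $i$ loses or nothing is purchased the profit is $0$ and one checks $V^{-i}=V$. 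In particular $V^{-i}-V\ge 0$. The same holds for each individual (pseudo-)agent, so after $i$ splits into $i',i''$ her total profit is the sum of $V^{-i'}-V$ and $V^{-i''}-V$ evaluated in the new instance.

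Next I would record the elementary effect of the split on adjusted costs: replacing $i$ by $i',i''$ raises $\len{S}$ by exactly $1$ if $S$ meets both $\Owned{i'}$ and $\Owned{i''}$, and leaves it unchanged otherwise. Since $\LPen{r}{\cdot}$ is increasing, this yields two facts: (a) every set's adjusted cost is weakly larger in the new instance, so $V$ can only increase; and (b) every set avoiding $\Owned{i'}$ keeps its adjusted cost, and since a set avoiding $\Owned{i}$ also avoids $\Owned{i'}$, the new $V^{-i'}$ is at most the old $V^{-i}$ (and symmetrically for $i''$). Two boundary cases are then immediate: if nothing is purchased after the split the post-split profit is $0$, at most the nonnegative pre-split profit; and if nothing is purchased before the split then $\Adj{S}>r$ for all $S$, hence also after the split by (a), so both profits are $0$. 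Otherwise let $S^*$ be the set purchased after the split. If $S^*$ meets at most one of $\Owned{i'},\Owned{i''}$, its width and hence adjusted cost are unchanged, at most one pseudo-agent wins, and by (a) and (b) the post-split profit $V^{-i'}-V$ (or the analogous term, or $0$) is at most $V^{-i}_{\mathrm{old}}-V_{\mathrm{old}}$, the pre-split profit.

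The remaining case --- $S^*$ meets \emph{both} $\Owned{i'}$ and $\Owned{i''}$, so both pseudo-agents are paid --- is the crux, since the post-split profit $(V^{-i'}-V)+(V^{-i''}-V)$ carries a ``$-2V$'' and naively looks like twice the pre-split profit; the resolution must use the exact penalty $\LPen{r}{k}=(1-2^{1-k})r$. Let $k:=\len{S^*}\ge 2$. Since $S^*$ is purchased, $V=\Adj{S^*}\le r$ and $\Adj{S^*}\ge\LPen{r}{k}$, hence $V^{-i'}-V\le r-\Adj{S^*}\le 2^{1-k}r$ (using $V^{-i'}\le r$). Viewed in the old instance $S^*$ has width $k-1$, and $\LPen{r}{k}-\LPen{r}{k-1}=2^{1-k}r$, so its old adjusted cost is $\Adj{S^*}-2^{1-k}r$, giving $V_{\mathrm{old}}\le\Adj{S^*}-2^{1-k}r$. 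Combining with $V^{-i''}\le V^{-i}_{\mathrm{old}}$ from (b), the post-split profit is
\[
(V^{-i'}-\Adj{S^*})+(V^{-i''}-\Adj{S^*}) \ \le\ 2^{1-k}r+\bigl(V^{-i}_{\mathrm{old}}-\Adj{S^*}\bigr) \ \le\ 2^{1-k}r+V^{-i}_{\mathrm{old}}-\bigl(V_{\mathrm{old}}+2^{1-k}r\bigr) \ =\ V^{-i}_{\mathrm{old}}-V_{\mathrm{old}},
\]
exactly the pre-split profit. The main obstacle is precisely this last interplay: the ``wasted'' penalty increment $\LPen{r}{k}-\LPen{r}{k-1}$ that the agent pays for appearing twice in $S^*$ has to cancel the extra copy of the reserve-capped outside option she collects as two pseudo-agents, and this is exactly where the specific form of $\LPen{r}{\cdot}$ is needed; carefully verifying this balance (and checking the ``nothing purchased'' and ``meets at most one half'' cases do not introduce leaks) is where the work lies.
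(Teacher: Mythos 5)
Your core computation checks out, and your route is genuinely different from the paper's: you recast \RP as VCG-with-reserve on adjusted costs, express each (pseudo-)agent's profit as the surplus $V^{-i}-V$, and then track how the split moves $V$ and the outside options. This lets you treat the changed-winner case uniformly, whereas the paper computes threshold payments directly, first handles the case of an unchanged winner, and then reduces the changed-winner case to it by having $i$ bid $\infty$ on $\Owned{i}\setminus S'^{*}$. Your crux-case cancellation --- $V^{-i'}-V\le r-\Adj{S^*}\le 2^{1-k}r$ set against the penalty increment $\LPen{r}{k}-\LPen{r}{k-1}=2^{1-k}r$ --- is exactly the balance the paper exploits in the form $2\LPen{r}{w+1}-\LPen{r}{w}=r$, using the same two ingredients ($V^{-i'}\le r$ and $V^{-i''}_{\mathrm{new}}\le V^{-i}_{\mathrm{old}}$); I verified this and the boundary cases, and for truthful bids the argument is sound.

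The genuine gap is the restriction to reported costs equal to true costs. The identity profit $=V^{-i}-V$ fails when $b\ne c$ on $\Owned{i}$: the profit picks up an extra term $(b-c)(S^*\cap\Owned{i})$ that depends on which set wins, and since the split can change the winner these terms need not cancel between the two scenarios. The paper's license to identify $b_e$ with $c_e$ applies only to mechanisms already established to be false-name-proof, so invoking it inside the proof of false-name-proofness is circular. The generality also matters downstream: Theorem~\ref{thm:false-name-proof} uses this lemma to rule out splitting for \emph{arbitrary} bids first, and only then argues truthfulness of the no-split mechanism. With your version one would instead need that truthful bidding is jointly optimal for the two pseudo-agents controlled by $i$ --- a coalition property that VCG-style mechanisms do not supply for free. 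To close the gap, either carry the $(b-c)(S^*\cap\Owned{i})$ term through explicitly (it cancels whenever the winner is unchanged) or import the paper's step of raising $i$'s bids to $\infty$ on elements outside the post-split winner so that the pre- and post-split winners coincide before you compare surpluses.
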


\begin{proof}
Let $S^*\in \Feasible$ be 
the winning set prior to agent $i$'s identifier split.
We first consider the case when the winning set does not change 
due to the identifier split.
If only one of the new pseudo-agents $i', i''$ wins (say, $i'$), then
$\Adj{S^{-i'}} \leq \Adj{S^{-i}}$, because every
feasible set not using elements from \Owned{i} also does not use
elements from \Owned{i'}. Hence, the payment of $i$ could only
decrease, and we may henceforth assume that both $i'$ and $i''$ win, 
which means that the width of the winning set $S^*$ increases from
\dkedit{\len{S^*} to $\len{S^*}+1$}.

For simplicity, we write $\THBid{-i} = \min(r, \Adj{S^{-i}})$, 
and similarly for $i'$ and $i''$.
The payment to $i$ before the split is
$\THBid{-i} - (b(S^* \setminus \Owned{i}) + \LPen{r}{w})$,
whereas the new payment after the split is
\begin{eqnarray*}
\lefteqn{\THBid{-i'} - (b(S^* \setminus \Owned{i'}) + \LPen{r}{w+1})
      + \THBid{-i''} - (b(S^* \setminus \Owned{i''}) + \LPen{r}{w+1})}\\
    & = & \THBid{-i'} + \THBid{-i''} - 2b(S^*) + b(S^* \cap \Owned{i}) 
    -2\LPen{r}{w+1}.
\end{eqnarray*}

As argued above, we have that $\THBid{-i''} \leq \THBid{-i}$, 
and by definition of $\THBid{-i'}$, we also know that $\THBid{-i'} \leq r$. 
Thus, canceling out penalty terms, the increase in payment to agent
$i$ is bounded from above by 
\begin{eqnarray*}
  \THBid{-i'} + \THBid{-i''} - \THBid{-i} - b(S^*) -r
  & \leq & r + \THBid{-i} - \THBid{-i} - b(S^*) - r \\
  & = & -b(S^*) \\
  & \leq & 0.
\end{eqnarray*}

Hence, identifier splitting can only lower the payment of agent $i$.
Since the total cost incurred by agent $i$ stays the same, this proves
that there is no benefit in identifier splitting.

Next, suppose that the winning set after the split changes to 
$S'^{*} \neq S^*$.
Clearly, if $i$ does not win at all after the split, i.e., 
$S'^* \cap \Owned{i} = \emptyset$, then $i$ has no incentive to split
identifiers. Otherwise, if $i$ does win after the split, then $i$ must
also win before the split. For the split can only increase
\LPen{r}{\len{S}} for all sets $S$ containing any of $i$'s elements,
while not affecting \LPen{r}{\len{S}} for other sets. 
We can assume w.l.o.g.~that agent $i$ bids $\infty$ on all elements 
$e \in \Owned{i} \setminus S'^*$. For the winning set will stay the
same, because \Adj{S'^*} stays the same, and \Adj{S} can only increase
for other sets $S$, and the payments can only increase.

But then, $S'^*$ will also be the winning set if $i$
does not split identifiers 
(the adjusted cost \Adj{S'^*} decreases, while all other
adjusted costs stay the same). Now, we can apply the argument from
above to show that the payments to agent $i$ do not increase as a
result of splitting identifiers.
Thus, so long as an agent can submit bids of false cost instead, it is
never a dominant strategy to split identifiers. 
\end{proof}

% \suggestedcut{
Lemma \ref{lem:identifier-splitting} can be extended naturally to deal
with $k$-way identifier splitting.
Notice that the proof also shows that \RP is false-name-proof
against self-division. 
% }

\begin{theorem} \label{thm:false-name-proof}
For all classes of set systems closed under subdivision, 
\RP is false-name-proof, even if agents can own multiple elements and
split identifiers. 
Thus, for each agent $i$, submitting bids $(c_e, i)$ 
for each element $e \in \Owned{i}$ is a dominant strategy.
\end{theorem}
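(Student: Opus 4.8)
The plan is to combine Lemma~\ref{lem:identifier-splitting} with incentive compatibility of the underlying VCG-with-reserve structure, and then verify robustness against self-division. First I would reduce the claim to three sub-claims: (i) no agent can profit from identifier splitting; (ii) no agent can profit from self-division; (iii) assuming no false-name manipulation, bidding $b_e = c_e$ for all owned elements is a dominant strategy. Sub-claim (i) is exactly Lemma~\ref{lem:identifier-splitting} together with its remark that it extends to $k$-way splitting by induction on the number of identifiers used. Sub-claim (ii) is handled by the same lemma's closing remark: self-division is the special case of identifier splitting in which the elements being separated are forced to always appear together in feasible sets, so the width argument applies verbatim; I would spell this out by noting that when agent $i$'s single element $e$ is subdivided, every feasible set either contains all the sub-elements $e_1,\dots,e_k$ or none of them, so the adjusted cost of each set changes exactly as in the identifier-splitting analysis with $\Owned{i} = \SET{e}$.

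Next I would address sub-claim (iii). Here I invoke the observation already made in the text: if the auctioneer is viewed as incurring an additional handling cost of $(1-2^{1-\len{S}}) \cdot r = \LPen{r}{\len{S}}$ for a team $S$, then \RP coincides exactly with VCG equipped with a reserve option of cost $r$ (the ``do nothing'' outcome, or equivalently a dummy direct $(s,t)$ edge of cost $r$). Since VCG is truthful for any valuation structure, and the reserve is a fixed known quantity rather than a strategic choice, each agent's dominant strategy in the no-manipulation game is to report $b_e = c_e$. Concretely, I would verify that the payment formula $\Pay{i} = \min(r,\Adj{S^{-i}}) - (b(S^*\setminus\Owned{i}) + \LPen{r}{\len{S^*}})$ is precisely the Clarke pivot payment for agent $i$ in this augmented problem, and that the selection rule minimizing $\Adj{S}$ subject to $\Adj{S}\le r$ is exactly welfare maximization including the reserve alternative; truthfulness then follows from the standard VCG argument (an agent's utility equals total reported welfare minus a term independent of her report).

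Finally I would assemble the pieces into a dominant-strategy claim. The subtlety — and the step I expect to require the most care — is combining the three reductions correctly: a deviating agent may simultaneously split identifiers (or self-divide) \emph{and} misreport costs, so I cannot simply cite the three sub-claims independently. The clean way is to argue in the order used by Lemma~\ref{lem:identifier-splitting}: first fix any false-name configuration and any cost reports within it, and show that ``un-splitting'' (merging all of $i$'s pseudo-agents back into one identifier, carrying over the bids in aggregate) never decreases profit — this is what the lemma establishes, since its argument bounds the profit change regardless of the numerical bids; then, having reduced to the single-identity game, apply the VCG truthfulness of sub-claim (iii) to conclude that reporting true costs is optimal. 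Chaining these, reporting $(c_e, i)$ for every $e \in \Owned{i}$ weakly dominates every strategy, which is the theorem. Along the way I would also note that the same un-splitting argument, now applied to pseudo-agents arising from self-division, shows the payment bound $\sum_i \Pay{i} \le r$ is preserved (each winning agent's threshold payment is at most $r$ minus the accumulated penalty, and the penalties are nonnegative), though that bound is really a separate remark rather than part of the false-name-proofness statement.
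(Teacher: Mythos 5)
Your proposal is correct and follows essentially the same route as the paper: it reduces self-division and identifier splitting to Lemma~\ref{lem:identifier-splitting} (whose argument holds for arbitrary bids, so the reduction to the single-identity game is valid even under combined deviations), and then establishes truthfulness in the single-identity game via the VCG-with-reserve observation, which is exactly the paper's calculation that agent $i$'s utility equals $B^{-i}$ minus the quantity \RP minimizes under truthful bidding. Your explicit care about the order of the two reductions (un-split first, then invoke truthfulness) is a point the paper treats only implicitly, but it is the same argument.
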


\begin{proof}
First, notice that if an agent owns two elements in the winning
solution, \aiedit{\RP does not treat the agent differently from if she only
owned one element.}
Thus, the proof of Lemma \ref{lem:identifier-splitting} 
also shows that self-division can never be beneficial for an agent,
and we can assume from now on that no agent will self-divide or split
identifiers.
Thus, each agent $i$ submits bids $(b_e, i)$ for all
elements $e \in \Owned{i}$. 
If the set $S^*\in \Feasible$ wins under \RP, agent $i$'s utility is 
\begin{eqnarray*}
\Pay{i}-c(S^* \cap \Owned{i})
& = & B^{-i} - \big(b(S^* \setminus \Owned{i}) + \LPen{r}{\len{S^*}}
  + c(S^* \cap \Owned{i})\big).
\end{eqnarray*}
Since $B^{-i}$ is a constant independent of the bids \dkedit{$b_e$} by agent
$i$, agent $i$'s utility is maximized when
$(b(S^* \setminus \Owned{i}) + \LPen{r}{\len{S^*}}
+ c(S^* \cap \Owned{i}))$ is minimized. But this is exactly the
quantity that \RP will minimize when agent $i$ submits truthful bids
for all her elements; hence, truthfulness is a dominant strategy.
\end{proof}

The next theorem proves that an auctioneer with a reserve cost of $r$
faces no loss.

\begin{theorem} \label{thm:no-loss}
The sum of the payments made by \RP to agents never exceeds $r$.
\end{theorem}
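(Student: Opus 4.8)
The plan is to show that the payment to each winning agent $i$ is bounded by the difference in adjusted costs contributed by $i$'s elements, and then to sum these contributions and observe that they telescope into the adjusted cost of $S^*$ itself, which is at most $r$ by the selection rule. First I would fix the winning set $S^* \in \Feasible$ and recall that, by definition, $\Pay{i} = \min(r,\Adj{S^{-i}}) - \big(b(S^* \setminus \Owned{i}) + \LPen{r}{\len{S^*}}\big)$ for each agent $i$ with $S^* \cap \Owned{i} \neq \emptyset$. Since $\min(r,\Adj{S^{-i}}) \leq r$ always, a first crude bound gives $\Pay{i} \leq r - b(S^* \setminus \Owned{i}) - \LPen{r}{\len{S^*}}$. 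This single-agent bound is not enough when there are several winning agents, so the work is in accounting for the overlap correctly.

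The key step is to sum over all winning agents. Let $i_1, \ldots, i_m$ be the distinct agents owning at least one element of $S^*$; note $m = \len{S^*}$. Using the bound $\min(r,\Adj{S^{-i_j}}) \leq r$ together with a sharper observation --- that for each winning agent $S^{-i_j}$ is itself a feasible set, and so $\Adj{S^{-i_j}} \geq \Adj{S^*}$ because $S^*$ is the minimizer of the adjusted cost (one has to be slightly careful: $S^{-i_j}$ ranges over sets \emph{avoiding} $\Owned{i_j}$, so it need not beat $S^*$, but $S^*$ is still the global minimizer, hence $\Adj{S^{-i_j}} \geq \Adj{S^*}$) --- is actually not the route; instead I expect the cleaner route is purely the $\le r$ bound plus careful bookkeeping of the $b(S^* \setminus \Owned{i_j})$ terms. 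Writing $b(S^* \setminus \Owned{i_j}) = b(S^*) - b(S^* \cap \Owned{i_j})$ and summing,
\[
\sum_{j=1}^{m} \Pay{i_j} \;\le\; \sum_{j=1}^m \Big( \min(r, \Adj{S^{-i_j}}) - b(S^*) + b(S^* \cap \Owned{i_j}) - \LPen{r}{\len{S^*}} \Big).
\]
Since the sets $S^* \cap \Owned{i_j}$ partition $S^*$, the terms $\sum_j b(S^* \cap \Owned{i_j})$ collapse to $b(S^*)$, leaving
\[
\sum_{j=1}^m \Pay{i_j} \;\le\; \sum_{j=1}^m \min(r,\Adj{S^{-i_j}}) - (m-1) b(S^*) - m \cdot \LPen{r}{m}.
\]
Now I bound each $\min(r,\Adj{S^{-i_j}})$ by $r$, giving an upper bound of $m r - (m-1) b(S^*) - m \LPen{r}{m}$. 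Dropping the non-negative term $(m-1) b(S^*)$ only increases the bound, so it suffices to show $m r - m \LPen{r}{m} \le r$, i.e. $m \big(r - \LPen{r}{m}\big) \le r$. Substituting $\LPen{r}{m} = (1 - 2^{1-m}) r$ gives $r - \LPen{r}{m} = 2^{1-m} r$, so the left side is $m \cdot 2^{1-m} r$, and the claim reduces to the elementary inequality $m \cdot 2^{1-m} \le 1$ for all integers $m \ge 1$, which holds with equality at $m=1$ and $m=2$. That finishes the bound when a set is purchased; when no set is purchased no payments are made and the bound $0 \le r$ is trivial.

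The main obstacle I anticipate is handling the $b(S^*)$ terms honestly: the naive per-agent bound double-counts the shared cost $b(S^* \setminus \Owned{i_j})$, and one must verify that after summing, the surviving coefficient of $b(S^*)$ is $-(m-1) \le 0$ so that it can be safely discarded --- and, symmetrically, that no hidden negativity in $b(S^* \cap \Owned{i_j})$ or the penalty terms ruins the telescoping. A secondary subtlety is whether $\min(r, \Adj{S^{-i_j}})$ could legitimately be replaced by something smaller than $r$ to tighten the argument; for this theorem it is not needed, but one should double-check that $\Adj{S^{-i_j}}$ is always well-defined, i.e. that a feasible set avoiding $\Owned{i_j}$ exists --- which follows from monopoly-freeness, and if not, $\Pay{i_j}$ is governed by the reserve cap $r$ anyway, so the bound $\min(r,\cdot) \le r$ still applies.
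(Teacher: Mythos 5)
Your proposal is correct and follows essentially the same route as the paper: bound each winning agent's payment using $\min(r,\Adj{S^{-i}})\le r$, discard the nonnegative bid terms (you do this after summing, the paper does it per agent --- the two are equivalent), and reduce to the elementary inequality $\len{S^*}\cdot 2^{1-\len{S^*}}\le 1$. The only cosmetic difference is that the paper first invokes false-name-proofness to identify $b_e$ with $c_e$ before bounding, whereas you carry the bids through explicitly.
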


\begin{proof}
Because we already proved that \RP is false-name-proof, we can without
loss of generality identify \dkedit{$c_e$ and $b_e$} for each element $e$.
When \dkedit{\LEN} agents are part of the winning set $S^*$, the payment to agent
$i$ is
\begin{eqnarray*}
\Pay{i}
& = & B^{-i} - \big(c(S^* \setminus \Owned{i}) + \LPen{r}{\LEN}\big)\\
& \leq & r - \big(c(S^* \setminus \Owned{i}) + r - \frac{r}{2^{\LEN-1}}\big)\\
& \dkedit{\leq} & \frac{r}{2^{\LEN-1}}.
\end{eqnarray*}
Thus, the sum of all payments to agents $i$ is at most
$w \cdot \frac{r}{2^{w-1}} \leq r$.
\end{proof}

% \begin{remark}
Since the reserve cost mechanism does not always purchase a feasible set, 
we cannot analyze its frugality ratio in the sense of Definition \ref{def:frugality}. 
(The definition is based on the assumption that the mechanism always purchases a set.) 
Nevertheless, if the auctioneer already has a way of performing the task 
using a single agent of cost $r$, 
(such as a direct edge with higher cost in a network), 
we can derive bounds on the frugality ratio of \dkedit{the} \RP mechanism. 
These bounds cannot be taken as actual hard guarantees, 
since we need to assume that the auctioneer was ``lucky'' 
in choosing the right reserve cost.

Specifically, assume that the auctioneer chose a reserve cost 
$r \leq \dkedit{2^n} \cdot \max_{e\in{S}}c(T_{e})$, where $S$ is the cheapest solution, 
and the sets $T_e$ are defined by the third constraint of Definition~\ref{def:chnash}. 
Since the total payment of \RP does not exceed $r$ by Theorem~\ref{thm:no-loss}, 
and $\ChNash{\costs} \geq \max_{e \in S} c(T_e)$ by Inequality \ref{eqn:nash-bound}, 
we obtain an upper bound of \dkedit{$O(2^n)$} on the frugality ratio, 
matching that of \EP. 
More generally, if the auctioneer chooses an $r \leq f(n) \cdot \max_{e\in{S}}c(T_{e})$, 
then the frugality ratio of the mechanism is $O(f(n))$.
% \end{remark}

%------------------------------------------------------------------------------
\subsection{Experiments} \label{sec:experiments}
%------------------------------------------------------------------------------
%Since the \RP mechanism does not always purchase a feasible set, 
%we cannot analyze its frugality ratio in the sense of Definition~\ref{def:frugality}. 
% (The definition is based on the assumption that the mechanism always purchases a set.) 
%Instead, 
% \dkedit{[DK: I commented out the first sentence, since it was an exact copy
%  of the first sentence of the previous paragraph.]}
We complement the analysis of the previous section 
with experiments for shortest $s$-$t$ path auctions on random graphs. 
Our simulation compares the payments of \RP with VCG, 
under the assumption that there is in fact no false-name manipulation 
and each agent owns one edge. 
Thus, we evaluate the overpayment caused by preventing false-name manipulation. 

Since some of our graphs have monopolies, 
we modify VCG by introducing a reserve cost $r$. 
Thus, if $S^*$ is the cheapest solution with respect to the cost, 
the reserve-cost VCG mechanism (\RVCG) only purchases a path when $c(S^*) \leq r$. 
In that case, the payment to each edge $e \in S^*$ is 
$\Pay{e} = \min (r, c(S^{-e})) - c(S^{*} \setminus \SET{e})$, 
where $S^{-e}$ is the cheapest solution not containing $e$.

Our generation process for random graphs is as follows: 
$40$ nodes are placed independently and uniformly 
at random in the unit square $[0,1]^2$. 
Then, 200 independent and uniformly random node pairs are connected
with edges.\footnote{%
We also ran simulations 
on random small-world networks~\cite{watts:nature:1998}.
\suggestedcut{
Many real-world networks in which routes must be purchased, 
such as truck routes, railroad tracks, natural gas pipeline networks,
or computer networks, are known to be small-world networks.
}
Our results for small-world networks are qualitatively similar, and we
therefore focus on the case of uniformly random networks here.
}
The cost of each edge $e$ is its Euclidean length. 
We evaluate $100$ random trials; in each, we seek to buy a path
between two randomly chosen nodes.  
While the number of nodes is rather small compared to the real-world
networks on which one would like to run auctions, it is dictated by
the computational complexity of the mechanisms we study. 
Larger-scale experiments are a fruitful direction for future work. 

Figure~\ref{fig:social surplus} shows the average social surplus 
(the difference between the reserve cost and the true cost 
incurred by edges on the chosen path, $r-\sum_{e \in S^*} c_{e}$) 
in \RP and \RVCG, as well as the ratio between the two, 
when varying the reserve cost $r \in [0,3.5]$. 
The social surplus for both increases roughly linearly 
under both mechanisms. 
While the plot shows some efficiency loss by using \RP, 
\dkedit{the efficiency} is always within a factor of about 60\% for
our instances, and on average around 80\%. 

Figure~\ref{fig:payments} illustrates the average payments of the auctioneer.
Clearly, small reserve costs lead to small payments, 
and when the reserve costs are less than $1.8$, 
the payment of \RP is in fact smaller than that of \RVCG.
As the reserve cost $r$ increases, \RVCG's payments converge, 
while those of \RP keep increasing almost linearly.
The reason is that the winning path in \RP tends to have fewer edges
than other competing paths, and is thus paid an increased bonus 
as $r$ increases. 
We would expect such behavior to subside 
as there are more competing paths with the same number of edges.
% Notice that the goal of \RP is not to minimize payments; in fact, 
% the mechanism is designed to pay an exponential bonus 
% in order to discourage false-name manipulation.

% \begin{comment}
% Experiments with larger graphs (30--50 nodes) give qualitatively
% similar results, and we omit them due to space constraints. Similarly,
% we evaluated the mechanism \EP from Section~\ref{sec:GUSC}
% experimentally, and found very similar behavior to \RP on small-world
% networks. At least for this class of networks, the specific form 
% of the multiplicative penalty does not seem to affect the performance of
% the mechanisms heavily.
% \end{comment}
%
\begin{figure}[htb]
  \begin{minipage}[t]{0.5\linewidth}
    \centering
    \includegraphics[width=\linewidth]{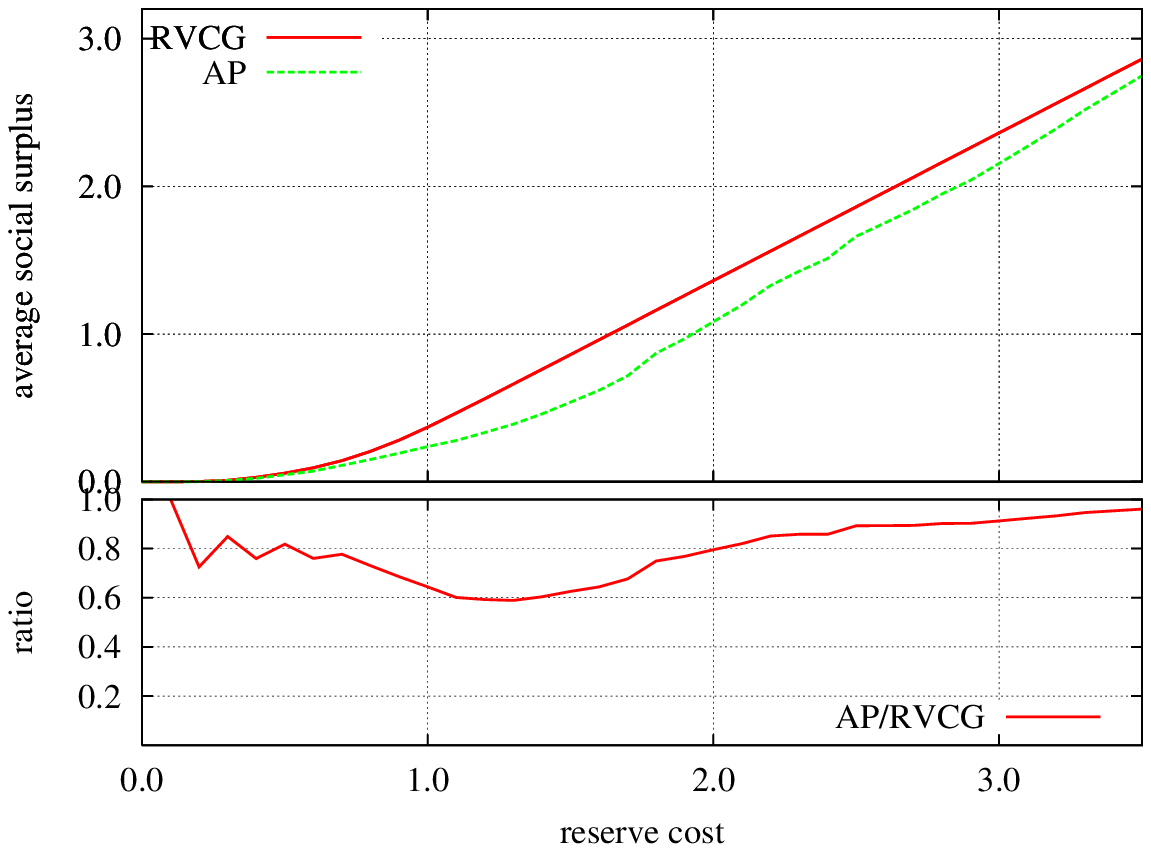}
    \caption{\dkedit{Social surplus}.}
    \label{fig:social surplus}
  \end{minipage}
  \begin{minipage}[t]{0.5\linewidth}
    \centering
    \includegraphics[width=\linewidth]{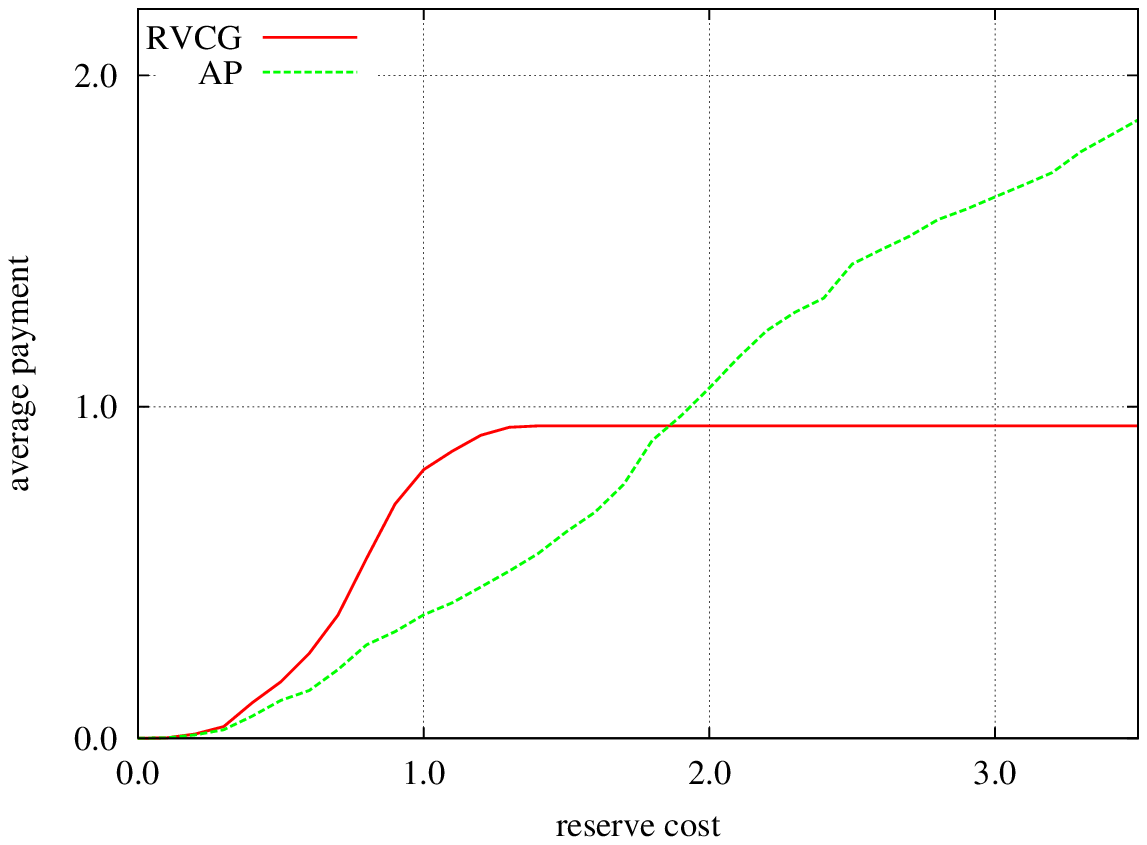}
    \caption{\dkedit{Payments}.}
    \label{fig:payments}
  \end{minipage}
\end{figure}
%------------------------------------------------------------------------------
\section{Concluding remarks}
%------------------------------------------------------------------------------
% \suggestedcut{
In this paper, 
we initiated the investigation of false-name-proof mechanisms for hiring a team of agents.
In this model, the structure of the set system may not be completely known to the auctioneer.
We first presented a mechanism \EP based on exponential multiplicative penalties, 
which always buys a solution, 
but is false-name-proof only when each agent has exactly one element. 
We proved that \EP has a frugality ratio of \dkedit{$2^n$}. 
This is within a \dkedit{constant} factor of optimal for all classes of set systems, 
as we also proved a lower bound of $\Omega(2^n)$ for all false-name-proof mechanisms. 

We also presented an \dkedit{alternative} mechanism \RP  
with exponential additive penalties and a reserve cost, 
which is false-name-proof % \myedit{
even when each agent has multiple elements.
We evaluated \RP experimentally; while it has smaller social surplus
compared to VCG, the difference is bounded by
small multiplicative constants in all of our experiments.
The payments of \RP \dkedit{are} smaller than \dkedit{those of} the
VCG mechanism when the reserve cost is small. 
Although the payments increase linearly in the reserve cost, 
they never exceed the reserve cost. 

It remains open whether there is a mechanism which always purchases a solution, 
and is false-name-proof even when each agent has multiple elements. 
This holds even for such seemingly simple cases as $s$-$t$ path auctions. 
It may be possible that no such mechanism exists, 
which would be an interesting result in its own right. 
The difficulty of designing false-name-proof mechanisms for hiring a team 
is mainly due to a lack of useful characterization results for 
incentive-compatible mechanisms when agents have multiple parameters. 
While a characterization of truthful mechanisms has been given by 
Rochet~\cite{rochet:87}, this condition is difficult to apply in practice. 
\suggestedcut{
Only recently did Lavi and Swamy \cite{lavi:swamy}
present incentive compatible mechanisms for scheduling on machines
based on the condition of Rochet.
}

\dkedit{
\subsubsection*{Acknowledgments}
We would like to thank several anonymous reviewers for helpful
feedback on a previous version of this paper.
}

%% The Appendices part is started with the command \appendix;
%% appendix sections are then done as normal sections
%% \appendix

%% \section{}
%% \label{}

% \bibliographystyle{elsarticle-num}
% \bibliography{names,conferences,bibliography,publications,auction,updated} 

\end{document}